\def\@fnsymbol#1{\ensuremath{\ifcase#1\or *\or
		\mathsection\or \mathparagraph\or \|\or **\or \dagger\dagger
		\or \ddagger\ddagger \else\@ctrerr\fi}}
\newtheorem{theorem}{Theorem}
\newtheorem{corollary}{Corollary}
\newtheorem{definition}{Definition}
\newtheorem{example}{Example}
\newtheorem{proposition}{Proposition}
\newtheorem{remark}{Remark}
\newtheorem{property}{Property}
\renewcommand{\epsilon}{\varepsilon}
\renewcommand{\hat}{\widehat}
\newcommand{\ran}{\mbox{\boldmath $l$}}
\newcommand{\ranc}{\mbox{\boldmath $c$}}
\newcommand{\mwcn}{\mathcal{W}^N_{\min}}
\newcommand{\wcv}{\mathcal{W}^v}
\newcommand{\mwcv}{\mathcal{W}_{\min}^v}
\newcommand{\mwcvv}{\mathcal{W}^{v^*}_{\min}}
\newcommand{\mwcvvp}{\mathcal{W}^{v^*_\pi}_{\min}}
\newcommand{\mwcvp}{\mathcal{W}^{v_\pi}_{\min}}
\newcommand{\mwcpvv}{\mathcal{W}^{v'^{*}}_{\min}}
\newcommand{\mwcpv}{\mathcal{W}^{v'}_{\min}}
\newcommand{\bcv}{\mathcal{B}^v}
\newcommand{\mbcv}{\mathcal{B}_{\min}^v}
\newcommand{\Min}{\mbox{Min }}
\date{ }
\begin{document}
	
	\title{Lexicographic Ranking based on Minimal Winning Coalitions
	}
	

	%
	\author{M. Aleandri\footnote{Luiss University, Viale Romania, 32, 00197 Rome, Italy. \textit{maleandri@luiss.it}   },
		V. Fragnelli\footnote{University of Eastern Piedmont, Department of Sciences and Innovative Technologies (DISIT), Viale T. Michel 11, 15121 Alessandria, Italy. \textit{franco.fragnelli@gmail.com}},
		S. Moretti\footnote{LAMSADE, CNRS, Universit\'e Paris-Dauphine, Universit\'e PSL, 75016 Paris, France. \textit{stefano.moretti@dauphine.fr} }           
	}
	%
	
	\maketitle
	
	\begin{abstract}
		In this paper, we consider the consistency of the desirability relation with the ranking of the players in a simple game provided by some well-known solutions, in particular the Public Good Index \cite{H82} and the criticality-based ranking \cite{ADFM21}. 
We define a new ranking solution, the Lexicographic Ranking based on Minimal winning coalitions (LRM), strongly related to the Public Good Index being rooted in the minimal winning coalitions of the simple game, proving that it is monotonic with respect to the desirability relation \cite{Isbell58}, when it holds. A suitable characterization of the LRM solution is provided.
Finally, we investigate the relation among the LRM solution and the criticality-based ranking, referring to the dual game.\\


\noindent
\textbf{keywords}: Desirability, Simple Games, Public Good Index,  Ranking, Dual Game
	\end{abstract}

\section{Introduction}
\label{intro}

The \textit{desirability relation} for players in simple games \cite{Carreras95, Freixas12,Isbell58,Zwicker99}  has been widely studied  also in connection with the property-driven analysis of power indices  \cite{Diffo02, Freixas97, Holler13}. A player $i$ is in desirability relation with a player $j$ (meaning that $i$ is at least as desirable as $j$) if we can replace player $j$ with player $i$ in any winning coalition  without changing the outcome, i.e. for any  winning coalition $S$ with $j \in S$ and $i \notin S$, we have that  $S \setminus \{j\} \cup \{i\}$ is still a winning coalition. So, the desirability relation between $i$ and $j$ suggests that player $i$ is at least as influential as  
player $j$, for it is never harmful for coalitions to replace $j$ by $i$. Clearly, the desirability relation is not necessarily a total relation on the set of players, as two players may not be in any desirability relation (see, for example, \cite{Freixas05, Freixas08, Holler13} for an analysis of properties of the desirability relation on simple games).  When the desirability relation is a total preorder the simple game is called complete and, for example, weighted games are complete, see \cite{Freixas10}.

A classical property for power indices based on the desirability relation, is the monotonicity property:  a power index  is monotone if, whenever a player $i$ is at least as desirable as player $j$, then the power of $i$ is at least as much as the power of player $j$ (see Remark \ref{Remark:mono}). For instance, the Shapley-Shubik index \cite{Shapley54}, the Banzhaf index \cite{Banzhaf65}, the Johnston index \cite{Johnston78}, the nucleolus \cite{Schmeidler69} and many other power indices satisfy the monotonicity property (see, for instance, \cite{Freixas97}). Instead, it is easy to provide examples of simple games showing this property is not satisfied by other famous indices that take into account exclusively minimal winning coalitions, like the Deegan-Packel index (DPI) \cite{Deegan80} or the Public Good Index (PGI) \cite{H82, Holler83} (see Section \ref{sec:pgr} for some examples from the literature  \cite{ DeeganPackel80, Freixas97}).

The main objective of this paper is to show that it is possible to rank players  consistently with the desirability relation and using exclusively minimal winning coalitions in an ordinal way. In fact, while classical power indices convert the  information about coalitions  into a numerical personal score representing players' relevance in a simple game, in many practical situations, having a reliable ranking to select the top players is enough and the information provided by players' score is only marginal. For instance, in the application of power indices to computational biology,  the goal is short-listing the most relevant genes on complex networks with a huge number of nodes \cite{Moretti07}. In a similar way, ranking players is essential for the analysis of centrality of  network elements with the goal to select the most critical or sensible parts of a system \cite{Lindelauf13}, or in studies aimed at establishing which agents are the strongest or the weakest in a voting system \cite{Ferto20}.

To that purpose, we introduce a \textit{ranking solution} (formally, a map that associates to any simple game with player-set $N$, a total preorder on $N$) aimed at ranking players in a simple game according to their influence and in a way that is compatible with the desirability relation. Our ranking solution  contains elements of both the DPI and the PGI,  taking into account  the minimal winning coalitions  an individual belongs to.  More exactly, given $n=|N|$ players, we first compute for each player a vector of $n$ real numbers, where the $k$-th component of each vector is the number of minimal winning coalitions of size $k$, with $k=1, \ldots, n$; second,  our ranking solution lexicographically compares  those real-valued vectors. Due to the similarity with the PGI to compute vectors components, we called such a ranking solution the \textit{Lexicographic Ranking based on Minimal winning coalitions} (shortly, the \textit{LRM}). 

We show that the \textit{LRM} is monotonic, and we prove that it is the unique solution satisfying (strong) monotonicity with respect to the desirability relation together with  two other axioms: 1) the \textit{coalitional anonymity} property, saying that the relative ranking between two players $i$ and  $j$ in two different simple games should
be independent of the identity of other players in minimal winning coalitions, provided that the number of minimal winning coalitions to which they belong in the two games is the same; 2) the property of \textit{independence of larger minimal winning coalitions}, saying that once a player $i$ is considered more influential
than a player $j$ in a simple game, player $i$ will continue to be considered more influential than $j$ in any simple game obtained by adding new ``larger'' minimal winning coalitions to the original game.

As a side-product of our analysis, we also point out some connections between our ranking solution and the criticality-based ranking provided  in \cite{ADFM21} to compare, in an ordinal way, the blocking power of players and we explore some similarities with the axioms  used to characterize the dual version of the criticality-based ranking.

The paper is structured as follows.
In Section \ref{pre} we provide some basic notions and notation. The definition of the \textit{LRM} is then introduced in Section \ref{sec:pgr} together with some examples comparing it with the ranking defined by other solutions from the literature. An axiomatic characterization of the \textit{LRM} is then presented and discussed in Section \ref{sec:ax}. A connection between the criticality-based ranking and the \textit{LRM} are investigated in Section \ref{sec:dual} using the desirability relation on dual games. Section \ref{sec:concl} concludes.

\section{Preliminaries and notation} \label{pre}

Given a finite set $N$, we denote by $|N|$ its \textit{cardinality} and by $2^N=\{S \subseteq N\}$ its \textit{power set}. 
A \emph{simple game} is a pair $(N,v)$, where $N=\{1,2,\ldots,n\}$ denotes a finite set of players and $v:2^N\to\{0,1\}$ is a \emph{characteristic function}, with $v(\varnothing)=0$, $v(N)=1$ and $v(S)\leq v(T)$ for all sets $S,T$ such that $S\subseteq T\subseteq N$. 
A coalition $S \subseteq N$ such that $v(S)=0$  is said a \emph{losing} coalition, whereas a coalition $S \subseteq N$ such that $v(S)=1$ is said a \emph{winning} coalition.  The class of simple games with $N$ as the set of players is denoted by $\mathcal{SG}^N$.\\
%
%
Let $\wcv$ be the set of winning coalitions in $(N,v)$
\[
\wcv= \{ S \subseteq N: v(S)=1 \}
\]
and let $\mwcv$ be the set of minimal winning coalitions in $(N,v)$
\[
\mwcv = \Min \wcv
\]
where, for any family of sets $\mathcal{F}$, the $\Min$ operator on $\mathcal{F}$ removes all non-inclusion-minimal sets of $\mathcal{F}$:
\[
\Min \mathcal{F}= \left\{ F \in \mathcal{F}| \nexists G \in \mathcal{F} : G \subset F \right\}.
\]
A simple game $(N,v)$ is a \textit{weighted majority game} if there exists
a vector of non-negative real numbers $w \in \mathbb{R}^N_{\geq 0}$ and a quota $q \in \mathbb{R}_{\geq 0}$ such that a coalition $S\subseteq N$ is winning
if and only if $\sum_{i \in S} w_i \geq q$.

In \cite{H82} the author introduced the \textit{Public Good index} (\textit{PGI}) of a player in a simple game, as the quotient between the number of minimal winning coalitions containing that player and the sum of cardinalities of all the minimal winning coalitions. Let $(N,v)$ be a simple game, the PGI of player $i\in N$:
\begin{equation*}
	h_v(i)=\frac{|\mwcv(i)|}{\sum_{j\in N}|\mwcv(j)|}
\end{equation*}
where $\mwcv(i)=\{W\in\mwcv:\, i\in W\}$.

\begin{example}\label{ex: 1} 
	Let $N=\{1,2,3,4,5\}$, and let 
	\[
	\mwcv =\{\{1,2\}, \{1,3\}, \{3,4\}, \{2,4,5\}, \{1,4,5\} \}.
	\]
	We have 
	\begin{equation*}
	h_v(1)=\frac{3}{12},\ h_v(2)=\frac{2}{12},\ h_v(3)=\frac{2}{12},\ h_v(4)=\frac{3}{12},\ h_v(5)=\frac{2}{12}.    
	\end{equation*}
\end{example}
In Example \ref{ex: 1} we can observe that player $3$ and $5$ have the same PGI, but they belong to minimal winning coalitions of different cardinality and whenever player $5$ is winning together with some coalition $S\subseteq N\setminus\{3,5\}$ then player $3$ is winning together with the same coalition. 

In \cite{Deegan80} the authors measure the power of a player according to the size of the minimal winning coalitions she belongs to. So, the \textit{Deegan-Packel index} (\textit{DPI}) for player $i$ is defined as:
\begin{equation*}
	\delta_v(i)=\sum_{W\in\mwcv(i)}\frac{1}{|\mwcv|}\frac{1}{|W|}.
\end{equation*}
\begin{example}Taking the same simple game $(N,v)$ of Example \ref{ex: 1} we have: 
	\begin{equation*}
		\delta_v(1)=\frac{8}{30},\ \delta_v(2)=\frac{5}{30},\ \delta_v(3)=\frac{6}{30},\ \delta_v(4)=\frac{7}{30},\ \delta_v(5)=\frac{4}{30}.    
	\end{equation*}
\end{example}
\noindent
According to the DPI, player $3$, for instance, has more power than player $5$ because it belongs to two minimal winning coalitions of size smaller than the two minimal winning coalitions containing player $5$. 

\section{A ranking solution and the desirability relation}\label{sec:pgr}

 Let us start recalling that a binary relation on $N$ is a subset of $N \times N$.  A reflexive, transitive and total binary relation on $N$ is a \textit{total preorder} (also called, a \textit{ranking}) on $N$. We denote by $\mathcal{T}^N$ the set of all total preorders on $N$. For instance, consider the lexicographic total preorder among vectors of real numbers:
 $$\mathbf{x} \ge_L \mathbf{y} \quad  \mbox{ if either } \mathbf{x}=\mathbf{y} \;\;
 \mbox{ or }\;  \exists k: x_t=y_t, t=1,\dots, k-1\; \mbox{ and } x_k>y_k.$$

  We define a \textit{ranking solution} or, simply, a \textit{solution}, as a map $R: \mathcal{SG}^N \rightarrow \mathcal{T}^N$  that associates to each simple game $v \in \mathcal{SG}^N$ a total preorder on $N$. The value assumed by a map $R$ on a simple game $v$ is the ranking on $N$  denoted by $R^v$. We use the notation $i R^v j$ to say that $(i,j) \in R^v$, and it means that $i$ is at least as important as $j$ according to ranking $R^v$,  for all $i,j \in N$. We denote by $I^v$ the symmetric part of $R^v$, i.e. $i I^v j$ means that $(i,j)\in R^v$ and $(j,i)\in R^v$ ($i$ and $j$ are equivalent), and by $P^v$ its asymmetric part, i.e. $i P^v j$ means that $(i,j)\in R^v$ and $(j,i)\notin R^v$ ($i$ is strictly more important than $j$).

Clearly, any real-valued $N$-vector numerically represents a total preorder over the player set $N$. Consequently, any power index $\phi: \mathcal{SG}^N \rightarrow \mathbb{R}^N$ underpins a ranking solution denoted by $R_{\phi}$ and such that $i R^v_{\phi} j \Leftrightarrow \phi_i(v) \geq \phi_j(v)$.

In this section, we introduce a new ranking solution for simple games based on minimal winning coalitions. 
The main idea of the new solution is that the smaller is the size of a minimal winning coalition, the larger is the power of its members. Therefore, the ranking of a player  is positively correlated first to the size of minimal winning coalitions the player belongs to and, second, to their number. 


To define the ranking solution, we need to introduce the notation $i_k$ representing the number of minimal winning coalitions of size $k$  containing  $i$  in a simple game $(N,v)$: $i_k=|\{S \in \mwcv: i \in S, |S|=k\}|$ for all $k \in \{1, \ldots, n\}$. For each $i \in N$, let $\theta_v(i)$ be the $n$-dimensional vector $\theta_v(i)=(i_1,\dots,i_n)$ associated to $v$.

\begin{definition}\label{def:LRM}[Lexicographic Ranking based on Minimal winning coalitions (LRM)]
	The \emph{Lexicographic Ranking based on Minimal winning coalitions (LRM)} solution is the function $R_{\ran}:\mathcal{SG}^N \longrightarrow \mathcal{T}^N$ defined for any  simple game $v \in \mathcal{SG}^N$ as
	$$
	i \  R_{\ran}^v \ j \qquad {\rm if\;} \qquad \theta_v(i)\;\;\ge_L \;\;\theta_v(j).
	$$
	Let $I_{\ran}^v$ and $P_{\ran}^v$ be the symmetric part and the asymmetric part of $R_{\ran}^v$, respectively.
\end{definition}


\begin{example}\label{ex:LRM1}
Consider the simple game of Example \ref{ex: 1}. We have that 
\begin{eqnarray*}
&\theta_{v}(1)=(0, 2, 1, 0, 0),\ &\theta_{v}(2)=(0, 1, 1, 0, 0),\\
&\theta_{v}(3)=(0, 2, 0, 0, 0),\ &\theta_{v}(4)=(0, 1, 2, 0, 0),\\
&\theta_{v}(5)=(0, 0, 2, 0, 0).
\end{eqnarray*}
So, the LRM solution $R_{\ran}^v$ ranks the players as follows
\begin{equation*}
	1\ P_{\ran}^v\ 3\ P_{\ran}^v\ 4\ P_{\ran}^v\ 2\ P_{\ran}^v\ 5.
\end{equation*}
Notice that the ranking provided by the PGI and the DPI do not coincide with the ranking $R_{\ran}^v$ on this example. In fact, for instance, $h_v(4)>h_v(3)$ and $\delta_v(4)>\delta_v(3)$, while $3\ P_{\ran}^v\ 4$.
\end{example}

The LRM solution always provides a total preorder over the player set $N$ for any simple game $(N,v)$. Instead, given a simple game $(N,v)$, the desirability relation \cite{Isbell58} is a preorder over the elements of $N$ and is defined as follows.
\begin{definition}
Let $(N,v)$ be a simple game. For any pair of players $i,j \in N$, the \textit{desirability relation}
$\succeq^v \subseteq N \times N$ is defined as follows:
\[
i \succeq^v j \Leftrightarrow [ S \cup \{j\} \in \mathcal{W}^{v} \Rightarrow S \cup \{i\} \in  \mathcal{W}^{v} \mbox{ for all } S \subseteq N \setminus \{i,j\}].
\]
\end{definition}
In the following, if the game $v$ on which $\succeq^v$ is defined is clear from the context, we denote relation  $\succeq^v$ simply by $\succeq$.
For any $i,j \in N$, $i \succeq j$ is interpreted as player $i$ is at least as desirable as
player $j$ (as a coalitional member); $i \succ j$ means that $i \succeq j$ and there exists a 
 coalition $T \subseteq N \setminus \{i,j\}$ such that $T \cup \{i\} \in \mathcal{W}^{v}$ but $T \cup \{j\} \notin \mathcal{W}^{v}$, and it is interpreted as player $i$ is (strictly) more desirable than player $j$; $i \sim j$ means that $i \succeq j$ and $j \succeq i$, i.e. it is true that $S \cup \{j\} \in \mathcal{W}^{v} \Leftrightarrow S \cup \{i\} \in \mathcal{W}^{v}$ for all $S \subseteq N \setminus \{i,j\}$, and it is interpreted as players $i$ and $j$ are equally desirable.

As discussed in Section \ref{intro}, the desirability relation, when it holds, represents a criterion to select between two players the most influential one, that is the player winning the maximum number of times.   
So, it is interesting to require the following basic property for ranking solutions.
\begin{property}[Desirable Monotonicity (DM)]\label{PD} \rm
Let $i,j \in N$. For any  $v \in \mathcal{SG}^N$, a solution $R$ satisfies the  \textit{desirable monotonicity}  property if
$$
i \sim^{v} j \Rightarrow i I^{v} j,
$$
and
$$
i \succ^{v} j \Rightarrow i P^{v} j.
$$
\end{property}
A solution satisfying the desirable monotonicity should strictly obey to the desirability relation: if the desirability relation between two players  is strict (i.e., $i \succ^v j$), then a ranking solution should put such players in a strict relation too (i.e., $i P^{v} j$); of course, if two players are equally desirable (i.e., $i \sim^{v} j$) then the ranking solution must define the same kind of relation (i.e., $i I^{v} j$). Notice that this kind of ``strong'' monotonicity relation is not satisfied by the ranking over players represented by the nucleolus \cite{Schmeidler69}, as it is easy to find examples of simple games having players in the symmetric part of the desirable relation and such that the allocation provided by the nucleolus is different (see, for instance, \cite{Freixas97} page 600).

The rankings over players represented by the DPI and the PGI do not satisfy the desirable monotonicity property, as shown by the following example.

\begin{example}\label{ex:nomon}
Consider a weighted majority game $(N,v)$, $N=\{1,2,3,4,5\}$, with weight function $(4,2,1,1,1)$ and quota $q=6$. So, the minimal winning coalitions are
\[
\mwcv =\{\{1,2\}, \{1,3,4\}, \{1,3,5\}, \{1,4,5\} \}.
\]
We have that $1 \succ^v 2 \succ^v 3 \sim^v 4 \sim^v 5$. However, according to the PGI $h_2(v)=\frac{1}{11}<\frac{2}{11}=h_3(v)$, while according to the DPI we have $\delta_2(v)=\frac{1}{8}<\frac{1}{6}=\delta_3(v)$. So, according to the rankings underpinned by both indices, player $3$ is ranked strictly higher than player $2$.

On the other hand,
\begin{eqnarray*}
&\theta_{v}(1)=(0, 1, 3, 0, 0),\ &\theta_{v}(2)=(0, 1, 0, 0, 0),\\
&\theta_{v}(3)=(0, 0, 2, 0, 0),\ &\theta_{v}(4)=(0, 0, 2, 0, 0),\\
&\theta_{v}(5)=(0, 0, 2, 0, 0).
\end{eqnarray*}
So, $1\ P_{\ran}^v\ 2\ P_{\ran}^v\ 3\ I_{\ran}^v\ 4\ I_{\ran}^v\ 5$:  $R_{\ran}^v$ and $\succeq^v$ coincide.

\end{example}

In general, a total preorder provided by the LRM solution coincides with the desirability relation on any simple game where the desirability relation is total. This fact is an immediate consequence of the following proposition.

\begin{proposition}\label{prop:pd}
The LRM solution $R_{\ran}$ fulfils the desirable monotonicity property. 
\end{proposition}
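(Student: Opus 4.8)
The plan is to reduce the desirability comparison to a size-by-size comparison of minimal winning coalitions and to exploit the ``swap'' map that replaces $j$ by $i$. For the fixed pair $i,j$ and each size $k$, I would partition the minimal winning coalitions of size $k$ into three families: $A_k$ (those containing $i$ but not $j$), $B_k$ (those containing $j$ but not $i$) and $C_k$ (those containing both). Then $i_k=|A_k|+|C_k|$ and $j_k=|B_k|+|C_k|$, so $i_k-j_k=|A_k|-|B_k|$; consequently $\theta_v(i)=\theta_v(j)$ iff $|A_k|=|B_k|$ for all $k$, and $\theta_v(i)\ge_L\theta_v(j)$ holds exactly when, at the smallest size $k^\ast$ with $|A_{k^\ast}|\neq|B_{k^\ast}|$, one has $|A_{k^\ast}|>|B_{k^\ast}|$. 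Thus the whole statement reduces to controlling the first size at which the two families differ.

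The central tool is the map $\Phi(W)=(W\setminus\{j\})\cup\{i\}$ defined on $B_k$. It is clearly injective, and $i\succeq^v j$ guarantees that $\Phi(W)$ is winning; moreover $\Phi(W)$ has size $k$, contains $i$ and not $j$. Hence $\Phi(W)\in A_k$ precisely when $\Phi(W)$ is inclusion-minimal. I would then prove the following key lemma by strong induction on $k$: if $i\succeq^v j$ and $|A_l|=|B_l|$ for all $l<k$, then $\Phi$ maps $B_k$ into $A_k$, and in particular $|A_k|\ge|B_k|$.

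The main obstacle is exactly the minimality of $\Phi(W)$: a priori $\Phi(W)$ could strictly contain a smaller minimal winning coalition $W''$, which (since any winning subset avoiding $i$ would already sit inside $W\setminus\{j\}\subsetneq W$ and contradict minimality of $W$) must contain $i$ and avoid $j$, so $W''\in A_m$ with $m<k$. To rule this out I would use the induction hypothesis: for every $l<k$ the map $\Phi\colon B_l\to A_l$ is onto, being an injection into $A_l$ between sets of equal cardinality $|A_l|=|B_l|$, so $W''=\Phi(V)$ for some $V\in B_m$; unwinding the definition gives $V=(W''\setminus\{i\})\cup\{j\}\subseteq (W\setminus\{j\})\cup\{j\}=W$ with $|V|=m<k$, hence a winning $V\subsetneq W$, contradicting the minimality of $W$. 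This is the heart of the argument, and everything else is bookkeeping.

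Finally I would assemble the three conclusions. Weak desirability $i\succeq^v j$ together with the key lemma yields $\theta_v(i)\ge_L\theta_v(j)$ by evaluating the lemma at $k^\ast$. For $i\sim^v j$ I would apply this in both directions to get $\theta_v(i)=\theta_v(j)$, i.e.\ $i\,I_{\ran}^v\,j$. For $i\succ^v j$ the relation $\ge_L$ already holds, so it remains to exclude $\theta_v(i)=\theta_v(j)$; here I would use the witnessing coalition $T\subseteq N\setminus\{i,j\}$ with $T\cup\{i\}\in\wcv$ and $T\cup\{j\}\notin\wcv$. Picking a minimal winning $W'\subseteq T\cup\{i\}$ forces $i\in W'$ and $j\notin W'$, so $W'\in A_m$; if the vectors were equal then $\Phi\colon B_m\to A_m$ would be a bijection, giving $V=(W'\setminus\{i\})\cup\{j\}\subseteq T\cup\{j\}$ winning, contradicting $T\cup\{j\}\notin\wcv$. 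Hence $\theta_v(i)>_L\theta_v(j)$, that is $i\,P_{\ran}^v\,j$, completing the proof.
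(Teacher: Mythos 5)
Your proof is correct, but it is organized along genuinely different lines than the paper's, and the difference matters. The paper handles the two implications separately: for $i\sim^v j$ it asserts (without proof) the equivalence between the biconditional on winning coalitions and the one on minimal winning coalitions; for $i\succ^v j$ it introduces the family $\mathcal{T}=\{T\subseteq N\setminus\{i,j\}:\ T\cup\{i\}\in\mwcv,\ T\cup\{j\}\notin\mwcv\}$, takes an element of minimum cardinality, and then simply asserts that the coordinates of $\theta_v(i)$ and $\theta_v(j)$ agree below the critical size and separate there. You instead prove a single key lemma for the weak relation $\succeq^v$ by strong induction on coalition size, and your ``ontoness'' step --- $\Phi\colon B_l\to A_l$ is an injection between sets of equal cardinality, hence a bijection, so any small minimal winning coalition inside $\Phi(W)$ pulls back to a winning proper subset of $W$ --- is exactly the justification the paper never supplies for why minimality survives the swap at and below the critical size. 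Your route also repairs an actual flaw: the paper's intermediate claim that $i\succ^v j$ rules out the existence of $S$ with $S\cup\{j\}\in\mwcv$ and $S\cup\{i\}\notin\mwcv$ is false as stated (take $\mwcv=\{\{1\},\{2,3\}\}$, where $1\succ^v 2$ but $\{2,3\}\in\mwcv$ while $\{1,3\}$ is winning yet not minimal); the claim only holds for coalitions of size below the critical one, which is precisely what your induction hypothesis provides. Likewise, your treatment of strictness --- extracting a minimal winning $W'\subseteq T\cup\{i\}$ from the witnessing coalition and pulling it back through the bijection to contradict $T\cup\{j\}\notin\wcv$ --- is a sound replacement for the paper's non-emptiness argument for $\mathcal{T}$. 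In short, the underlying swap-and-count idea is the same, but your inductive bookkeeping turns the paper's sketch into a complete and airtight argument.
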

\begin{proof}
Let $(N,v)$ be a simple game.
It is easy to verify that the condition
\begin{equation}\label{equivw}
S \cup \{j\} \in \mathcal{W}^{v} \Leftrightarrow S \cup \{i\} \in \mathcal{W}^{v} \mbox{ for all } S \subseteq N \setminus \{i,j\}
\end{equation}
is equivalent to the condition
\begin{equation}\label{equivwmin}
S \cup \{j\} \in \mathcal{W}^{v}_{\min} \Leftrightarrow S \cup \{i\} \in \mathcal{W}^{v}_{\min} \mbox{ for all } S \subseteq N \setminus \{i,j\}.
\end{equation}
We prove that $i \sim^{v} j \Rightarrow i I_{\ran}^{v} j$.\\
Since $i \sim^{v} j $, according to the equivalence between relations (\ref{equivw}) and (\ref{equivwmin}), we immediately have
that $S \cup \{j\} \in \mathcal{W}^{v}_{\min} \Leftrightarrow S \cup \{i\} \in \mathcal{W}^{v}_{\min}$ for all $S \subseteq N \setminus \{i,j\}$. So, $\theta_v(i)=\theta_v(j)$, and therefore $i I_{\ran}^{v} j$.\\ \\
Now, we prove that $i \succ^{v} j \Rightarrow i P_{\ran}^{v} j$.\\

Let be $i \succ^{v} j$ and define $\mathcal{T}=\{ T \subseteq N \setminus \{i,j\}: T\cup \{i\} \in \mathcal{W}^{v}_{\min}, T \cup \{j\} \notin \mathcal{W}^{v}_{\min}\}$. \\\noindent
We first need to prove that  $\mathcal{T}\neq \emptyset$. Since $i \succ^{v} j$, it must exist $ T \subseteq N \setminus \{i,j\}$ such that $ T\cup \{i\} \in \mathcal{W}^{v}$ and $T \cup \{j\} \notin \mathcal{W}^{v}$ and, by the equivalence between relation (\ref{equivw}) and (\ref{equivwmin}), it is not possible that $S \cup \{j\} \in \mathcal{W}^{v}_{\min} \Leftrightarrow S \cup \{i\} \in \mathcal{W}^{v}_{\min} \mbox{ for all } S \subseteq N \setminus \{i,j\}$. Moreover, again for $i \succ^{v} j$, it  is not possible that there exists $S \subseteq N \setminus \{i,j\}$ such that $S\cup \{i\} \notin \mathcal{W}^{v}_{\min}$ and  $S \cup \{j\} \in \mathcal{W}^{v}_{\min}$. So, it must exist $T \subseteq N \setminus \{i,j\}$ such that
 $T\cup \{i\} \in \mathcal{W}^{v}_{\min}$ and $T \cup \{j\} \notin \mathcal{W}^{v}_{\min}$.

Now, let $k=\min\{|T|: T\in\mathcal{T}\}$.
If $k=0$, we immediately have that $\{i\} \in \mathcal{W}^{v}_{\min}$ and $\{j\} \notin \mathcal{W}^{v}_{\min}$, so $i P_{\ran}^{v} j$.\\
Consider the case $k>0$. By the minimality of $k$ we have that $i_t=j_t$ for all $t=0,\ldots,k-1$ and $i_k > j_k$ and so $i P_{\ran}^{v} j$.

\end{proof}

\begin{remark}\label{Remark:mono}
It is well known from the literature that the desirability relation on weighted majority games is a total preorder  and that the following monotonicity condition w.r.t. weights holds for  a weighted majority game $(N,v)$ with weights $(w_1, \ldots, w_n)$:
\[
w_i \geq w_j \Rightarrow i \succeq^v j,
\] 
for all $i,j \in N$ (see for instance \cite{Freixas97}). As a direct consequence of Proposition \ref{prop:pd} we have that also the LRM solution on weighted majority games is monotonic w.r.t. weights, that is $w_i \geq w_j \Rightarrow i R_{\ran}^v j$ for all $i,j \in N$.

\end{remark}

\section{An axiomatic characterization of the LRM solution}\label{sec:ax}

Now, we introduce two new properties for ranking solutions that are inspired by similar properties introduced in \cite{ADFM21}  on the sets of blocking coalitions.

The next property says that winning coalitions of the same size should have the same impact on the ranking, independently of their members.
\begin{property}[Anonymity of Minimal Winning Coalitions (AMWC)]\label{AMWC} \rm
	Let $i,j \in N$,  $v, v_{\pi} \in \mathcal{SG}^N$ and let $\pi$ be a bijection on $2^{N \setminus\{i,j\}}$ with $|\pi(S)|=|S|$ and such that
	$$ S \cup \{i\} \in \mwcv \Leftrightarrow  S \cup \{i\} \in \mwcvp$$
	and
	$$ S \cup \{j\} \in  \mwcv \Leftrightarrow  \pi(S) \cup \{j\} \in  \mwcvp,$$
	for all $S \in 2^{N \setminus\{i,j\}}$. 
	A solution $R$ satisfies the  \textit{anonymity of minimal winning coalitions}  property if
	$$i R^v j \Leftrightarrow iR^{v_{\pi}} j.$$
\end{property}

\begin{example}\label{ex:coalan}
Consider the weighted majority game $(N,v)$ of Example \ref{ex:nomon} and the players $3$ and $4$ in the role of players $i$ and $j$ of the definition of Property \ref{AMWC}.
Define a bijection $\pi$ on $2^{\{1,2,5\}}$ such that $\pi(\{1,5\})=\{2,5\}$. So the simple game $(N, v_{\pi})$ is such that
\[
\mwcvp=\{\{1,2\}, \{1,3,4\}, \{1,3,5\}, \{2,4,5\} \}.
\]
Game $v$ differs from $v_{\pi}$ in terms of minimal winning coalitions just for coalition $\{1,4,5\}$ which is replaced in $v_{\pi}$ by the minimal winning coalition $\{2,4,5\}$. Nevertheless,  the number  of minimal winning coalitions of each size containing player $4$ in game in $v_{\pi}$ is precisely as in game in $v$, so her capacity to form minimal winning coalitions should not be affected (assuming that the other players are equally inclined to form minimal winning coalitions with $4$). So, the property of Anonymity of Minimal Winning Coalitions says that the relative ranking between $3$ and $4$ in $v$ should be the same  as in $v_{\pi}$. 
\end{example}
Property \ref{AMWC} reflects a broadly adopted principle, satisfied by classical power indices like the Shapley-Shubik index \cite{Shapley54}, the Banzhaf index \cite{Banzhaf65} and all semivalues \cite{Dubey81}, saying that coalitions of the same size are equally likely. So, it seems compelling to assume that the relative position of two players is not affected by permutations preserving the size of minimal winning coalitions containing them, as it is required by Property \ref{AMWC}.

Another property we consider in our analysis is the one of independence of larger minimal winning coalitions, saying that, once a solution exists, in which a player $i$ is ranked strictly better than a player $j$, adding ``larger'' minimal winning coalitions should not affect the relative ranking between $i$ and $j$.
\begin{property}[Independence of Larger Minimal Winning Coalitions (ILMWC)]\label{IHC} \rm
	Let $i,j \in N$. For any  $v \in \mathcal{SG}^N$, let $h=\max\{|S|:S \in \mwcv \mbox{ and } S \cap \{i,j\} \neq \varnothing\}$ be the highest cardinality of coalitions in the set $ \mwcv$ containing either $i$ or $j$. Let $\mathcal{S}_h$ be a collection of (minimal) winning coalitions with cardinality strictly larger than $h$, i.e., 
	$\mathcal{S}_h=\{S_1, \ldots, S_r\}$ such that $S_k \subseteq N$, $|S_k| >h$ for $k=1, \ldots, r$ and there is no $Q \in  \mwcv\cup \mathcal{S}_h$ with $Q \subset S_k$,
	for all $k \in \{1, \ldots, r\}$.
	A solution $R$ satisfies the  \textit{independence of larger minimal winning coalitions}  property if
	$$
	i P^v j \Rightarrow iP^{v'} j,
	$$
	where $v'$ is a simple game such that the set of minimal winning coalitions  is obtained as  $ \mwcpv= \mwcv \cup \mathcal{S}_h$.
	
\end{property}

\begin{example}\label{ex:ihc}
Consider again the weighted majority game $(N,v)$ of Example \ref{ex:nomon} and the player $1$ and $2$ in the role of players $i$ and $j$ of the definition of Property \ref{IHC}.
Let $\mathcal{S}_h=\{\{2,3,4,5\}\}$ and consider a new simple game $(N, v')$ such that
\[
\mwcpv =\{\{1,2\}, \{1,3,4\}, \{1,3,5\}, \{1,4,5\}, \{2,3,4,5\} \}.
\]
Notice that the new simple game $v'$ contains one more minimal winning coalition containing $2$ but not $1$, but the size of such a minimal winning coalition in $v'$ is strictly larger than the size of any minimal winning coalition in $v$, and therefore is considered less likely to form. If a solution satisfying the property of independence of larger minimal winning coalitions ranks $1$ strictly better than $2$ in the simple game $v$,  in $v'$ the solution also must rank $1$ strictly better than $2$: the new (and larger) minimal winning coalition does not affect the strict ranking decided on the basis of smaller minimal winning coalitions.
\end{example}

In collective decision-making bodies, forming large winning coalitions in practice may result more difficult than forming small ones due to many factors, like the presence of complex institutional rules, the need of mediators in the decision-making process, higher negotiation costs or other ``psychological'' aspects, like contrasting political positions of their members. As a consequence, it is crucial 
to emphasize the impact of minimal winning coalitions of small size, as demanded by Property \ref{IHC}, which preserves strict rankings after the addition of large minimal winning coalitions.

\begin{proposition}\label{prop:DCAPR}
	Let $R$ be a solution satisfying Properties \ref{PD} (DM)  and \ref{AMWC} (AMWC). Then  for any simple game $v$ and $i,j \in N$ such that $\theta_v(i)=\theta_v(j)$ we have that $i I^v j$.
\end{proposition}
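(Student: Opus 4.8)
The plan is to reduce the hypothesis $\theta_v(i)=\theta_v(j)$ to a balancing condition on minimal winning coalitions, then use AMWC to relabel the coalitions of $j$ so that $i$ and $j$ become equally desirable in a new game $v_\pi$, and finally transport the resulting indifference back to $v$. First I would split $\mwcv$ according to its intersection with $\{i,j\}$ and observe that, for each size $k$, the minimal winning coalitions of size $k$ containing both $i$ and $j$ contribute equally to the $k$-th components of $\theta_v(i)$ and of $\theta_v(j)$. Hence $\theta_v(i)=\theta_v(j)$ is equivalent to saying that, for every $m$, the number of sets $S\subseteq N\setminus\{i,j\}$ with $|S|=m$ and $S\cup\{i\}\in\mwcv$ equals the number of such $S$ with $S\cup\{j\}\in\mwcv$. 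Writing $A_i=\{S\subseteq N\setminus\{i,j\}: S\cup\{i\}\in\mwcv\}$ and $A_j=\{S\subseteq N\setminus\{i,j\}: S\cup\{j\}\in\mwcv\}$, this means precisely that $A_i$ and $A_j$ have equally many members of each cardinality.

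Next I would build the permutation required by AMWC. Since $A_i$ and $A_j$ agree in their cardinality profile, inside each layer of $m$-element subsets of $N\setminus\{i,j\}$ there is a bijection sending the members of $A_j$ onto the members of $A_i$; gluing these layerwise bijections yields a size-preserving bijection $\pi$ on $2^{N\setminus\{i,j\}}$ with $\pi(A_j)=A_i$. I would then define $v_\pi$ as the simple game generated by upward closure from the family obtained from $\mwcv$ by replacing each coalition of the form $S\cup\{j\}$ with $S\in A_j$ (those containing $j$ but not $i$) by $\pi(S)\cup\{j\}$, and keeping every other coalition unchanged. The point to verify is that this family realizes exactly the conditions of Property \ref{AMWC}: the coalitions containing $i$ but not $j$ are untouched, and a coalition $T\cup\{j\}$ is minimal winning in $v_\pi$ precisely when $T\in\pi(A_j)=A_i$.

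The main obstacle is checking that $v_\pi$ is a bona fide simple game in which these relabelled coalitions survive as minimal winning coalitions. I would argue that no relabelled coalition $T\cup\{j\}$ with $T\in A_i$ is comparable to a coalition containing $i$ (the latter contains $i\notin T\cup\{j\}$), that the relabelled family is itself an antichain because $A_i$ is (being the index set of a sub-antichain of $\mwcv$), and that no coalition $M$ meeting neither $i$ nor $j$ can sit below some $T\cup\{j\}$, since $M\subseteq T$ would force $M\subsetneq T\cup\{i\}\in\mwcv$, contradicting minimality in $v$. The only coalitions that may lose minimality are those containing both $i$ and $j$, and this is harmless: by the equivalence of (\ref{equivw}) and (\ref{equivwmin}) established in the proof of Proposition \ref{prop:pd}, the desirability relation between $i$ and $j$ depends only on coalitions of the form $S\cup\{i\}$ and $S\cup\{j\}$. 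With the construction in place, in $v_\pi$ one has $S\cup\{i\}\in\mwcvp \Leftrightarrow S\in A_i \Leftrightarrow S\cup\{j\}\in\mwcvp$ for all $S\subseteq N\setminus\{i,j\}$, hence $i\sim^{v_\pi}j$; by DM (Property \ref{PD}) this gives $i\,I^{v_\pi}\,j$, and AMWC (Property \ref{AMWC}) then yields $i\,R^v\,j$. Running the same construction with the roles of $i$ and $j$ exchanged (fixing $j$ and relabelling $i$ via a size-preserving bijection carrying $A_i$ onto $A_j$) gives $j\,R^v\,i$, and the two conclusions together establish $i\,I^v\,j$.
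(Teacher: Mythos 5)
Your proof is correct and follows essentially the same route as the paper's: isolate the coalitions containing exactly one of $i,j$ (the layers of $A_i$ and $A_j$ must match in size once the coalitions containing both players cancel out), build a size-preserving bijection carrying $A_j$ onto $A_i$, pass to an auxiliary game $v_\pi$ whose minimal winning coalitions containing $j$ mirror those containing $i$, deduce $i\sim^{v_\pi}j$, and conclude via DM and AMWC. If anything you are more careful than the paper: its auxiliary game retains only the coalitions of the form $T\cup\{i\}$ and $T\cup\{j\}$ with $T\in A_i$ (so the antichain and minimality checks you perform become immediate there), and it applies AMWC a single time, implicitly reading it as preserving the relative ranking in both directions, whereas your second, role-swapped application of AMWC derives $j\,R^v\,i$ without relying on that symmetric reading.
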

\begin{proof}
Since $\theta_v(i)=\theta_v(j)$, we have that $i_k=j_k$ for all $k \in \{1, \ldots, n\}$. Define a bijection $\pi$  on $2^{N \setminus\{i,j\}}$ such that for each $k \in \{1, \ldots, n-1\}$ and for each coalition $S \in 2^{N \setminus\{i,j\}}$  
of size $k-1$ with $S \cup \{j\} \in \mwcv$, $\pi(S)=T$, where $T \in 2^{N \setminus\{i,j\}}$ is a coalition of size $k-1$, with $T \cup \{i\} \in \mwcv$. Consider a game $v_{\pi}$ such that $ S \cup \{i\} \in \mwcv \Leftrightarrow  S \cup \{i\} \in \mwcvp$
and $ S \cup \{j\} \in \mwcv \Leftrightarrow  \pi(S) \cup \{j\} \in \mwcvp$. So, we have that for all $k \in \{1, \ldots, n\}$ and all coalitions $T \in 2^{N \setminus\{i,j\}}$ of size $k-1$ with $T \cup \{i\} \in \mwcv$
\[
T \cup \{i\} \in \mwcvp \Leftrightarrow  T \cup \{j\} \in \mwcvp.
\]
Then, $i \sim^{v_{\pi}} j$ (players $i$ and $j$ are equally desirable in $v_{\pi}$) and, by Property \ref{PD}
\begin{equation}\label{rel:equivpd}
i I^{v_{\pi}} j.
\end{equation}
 Notice that $i,j, \pi,$ $v$ and $v_{\pi}$  satisfy the conditions for bijections demanded in the statement of Property \ref{AMWC}, with $v_{\pi}$ such that the set of minimal winning coalitions of $v_{\pi}$ is 
 \[
 \mwcvp=\bigcup_{T \in 2^{N \setminus\{i,j\}} \mbox{ s.t. } T \cup \{i\} \in\mwcv} \{ T \cup \{i\}, T \cup \{j\}\}
 \] 
(notice that the minimality of the elements in $\mwcvp$ is guaranteed by the minimality of the elements in $\mwcv$).
So, since $R$ satisfies Property \ref{AMWC}, we have that
\begin{equation}\label{DCAapp}
i I^v j \Leftrightarrow iI^{v_{\pi}} j.
\end{equation}
So, by relation (\ref{rel:equivpd}), we have $i I^v j$, which concludes the proof.
\end{proof}

\begin{theorem}\label{2}
The LRM solution $R_{\ran}$ is the unique solution that fulfils Properties \ref{PD} (DM), \ref{AMWC} (AMWC) and \ref{IHC} (ILMWC).
\end{theorem}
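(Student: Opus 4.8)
The plan is to prove the two standard halves of a characterization: first that $R_{\ran}$ satisfies the three properties, and then that they pin it down uniquely. The existence half is short. Property \ref{PD} (DM) is exactly Proposition \ref{prop:pd}. For Property \ref{AMWC} (AMWC), the point is that its conditions force the minimal winning coalitions containing $i$ to be the same in $v$ and $v_\pi$, while, since $\pi$ preserves cardinality, the size profile of those containing $j$ is preserved as well; hence $\theta_v(i)=\theta_{v_\pi}(i)$ and $\theta_v(j)=\theta_{v_\pi}(j)$, so the lexicographic comparison, and therefore $i R_{\ran}^v j$, is unchanged. For Property \ref{IHC} (ILMWC), if $i P_{\ran}^v j$ let $k$ be the first coordinate where $\theta_v(i)$ and $\theta_v(j)$ differ; then $i_k>j_k\ge 0$ forces a size-$k$ minimal winning coalition through $i$, so $k\le h$, while every coalition added to form $v'$ has size $>h\ge k$ and thus leaves the first $k$ coordinates of both vectors untouched, giving $i P_{\ran}^{v'} j$.

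For uniqueness, let $R$ satisfy the three properties. By Proposition \ref{prop:DCAPR}, $\theta_v(i)=\theta_v(j)$ already yields $i I^v j$, matching $R_{\ran}$. Since $R$ is a total preorder and $\ge_L$ is total, it suffices to establish the single strict implication
\[
\theta_v(i)\,>_L\,\theta_v(j)\ \Longrightarrow\ i P^v j,
\]
the reverse case following by exchanging $i$ and $j$. Fix such a game and let $k$ be the first coordinate where the two vectors differ; then the numbers of minimal winning coalitions of each size $t<k$ containing $i$ but not $j$, and $j$ but not $i$, coincide, while at size $k$ there are strictly more of the former than of the latter (the coalitions containing both $i$ and $j$ cancel in each coordinate).

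The heart of the argument is to manufacture from $v$ a game in which $i$ is strictly more desirable than $j$, so that Property \ref{PD} applies. First I would strip the large coalitions: let $v_0$ be the game whose minimal winning coalitions are those of $v$ of size at most $k$. Because $i$ owns a size-$k$ minimal winning coalition, the threshold $h$ of Property \ref{IHC} for $v_0$ equals $k$, and $v$ is recovered from $v_0$ by re-adding the coalitions of size $>k$; hence ILMWC reduces the claim to $i P^{v_0} j$. Next I would invoke AMWC to relabel the minimal winning coalitions of $v_0$ that contain $j$ but not $i$ so that, at every size $t\le k$, their ``$j$-removed'' parts become a subset of the ``$i$-removed'' parts of the coalitions through $i$; this is possible precisely because of the counting equalities for $t<k$ and the strict inequality at $t=k$. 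One checks the relabeled family is again an antichain, since any violating containment would force a proper subset of a minimal winning coalition through $i$ to be winning, a contradiction; so the relabeled game $\hat v_0$ is well defined and, by AMWC, ranks $i$ against $j$ exactly as $v_0$ does. In $\hat v_0$ the alignment makes replacing $j$ by $i$ in any winning coalition again winning, so $i\succeq^{\hat v_0} j$, while an ``$i$-only'' size-$k$ coalition left unmatched by the relabeling witnesses $i\succ^{\hat v_0} j$, because swapping its $i$ for $j$ produces a set containing no minimal winning coalition. Property \ref{PD} then gives $i P^{\hat v_0} j$, and transporting this strict relation back through AMWC and then ILMWC yields $i P^v j$.

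I expect the main obstacle to be the construction and verification of $\hat v_0$: one must choose the size-preserving bijection so that the relabeled coalitions simultaneously keep the size profiles of $i$ and $j$ intact, form a genuine antichain, and leave a bona fide strict-desirability witness at size $k$, namely an $i$-coalition whose $j$-analogue is actually \emph{losing} in $v_0$. A secondary but delicate point is the bookkeeping of strictness across the property applications: AMWC is phrased for the weak relation, so one must argue, as is implicitly done for indifference in the proof of Proposition \ref{prop:DCAPR}, that it also transports the asymmetric part $P$, ensuring the strict ranking produced by DM on $\hat v_0$ survives all the way back to $v$.
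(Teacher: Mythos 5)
Your proposal is correct and takes essentially the same route as the paper's proof: there, too, the uniqueness argument strips the minimal winning coalitions of size larger than the first coordinate $k'$ where the two $\theta$-vectors differ (using ILMWC), realigns $j$'s remaining coalitions with $i$'s via a size-preserving bijection on $2^{N\setminus\{i,j\}}$ (using AMWC), and then applies DM to the resulting game $\hat{v}_{\pi}$ in which $i$ is strictly more desirable than $j$, before transporting the strict ranking back to $v$. The delicate points you flag --- verifying that the relabeled family is an antichain, exhibiting the unmatched size-$k'$ coalition as the strict-desirability witness, and the fact that AMWC must carry the asymmetric part $P$ --- are exactly the steps the paper handles only implicitly, so your plan is, if anything, slightly more careful than the published argument.
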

\begin{proof} 
 By Proposition \ref{prop:pd} we have that $R_{\ran}$ fulfils Property \ref{PD} (DM). It is easy to check that it also fulfils Properties \ref{AMWC} (AMWC) and  \ref{IHC} (ILMWC) (it directly follows from Definition \ref{def:LRM} and the lexicographic relation).

To show that $R_{\ran}$ is the unique index fulfilling Properties \ref{prop:pd} (DM), \ref{AMWC} (AMWC) and  \ref{IHC} (ILMWC), we need to prove that, if a solution $R: \mathcal{SG}^N \rightarrow \mathcal{T}^N$ satisfies Properties  \ref{prop:pd} (DM), \ref{AMWC} (AMWC) and  \ref{IHC} (ILMWC), then
$ iR_{\ran}^{v} j \Leftrightarrow iR^{v} j$ or, equivalently, $ i P_{\ran}^{v} j \Leftrightarrow i P^{v} j$ and $ i  I_{\ran}^{v} j \Leftrightarrow i I^{v} j.$\\ \\
\noindent
We first prove that $ i P_{\ran}^{v} j \Leftrightarrow i P^{v} j$:\\
\noindent
($\Rightarrow$)\\
Let $iP_{\ran}^{v} j$. By Definition \ref{def:critbased}, let $k'$ be the smallest integer in $\{1, \ldots, n\}$ with $i_{k'} > j_{k'}$. Let $s=i_{k'} - j_{k'}$ and $\mathcal{S}^i_{k'}=\{S \in \mwcv: |S|=k' \mbox{ and } S \cap \{i,j\}=i\}$ be a subset of  coalitions in $\mwcv$ of size $k'$ containing $i$ but not $j$ such that $|\mathcal{S}^i_{k'}|=s$. Moreover, let $\Sigma= \{S \in \mwcv: |S|>k'\}$ be the set of coalitions in $\mwcv$ with cardinality strictly larger than $k'$. 

Consider a new simple game $v'$ such that $\mathcal{W}^{v'}_{\min}=\mwcv \setminus  \Sigma$,
and the set of minimal winning coalitions containing $j$ (of size at most $k'$) in $\mathcal{W}^{v'}_{\min}$:
\[
\mathcal{S}^j=
\{S\cup \{j\}: S \in 2^{N \setminus\{i,j\}} \mbox{ with } S \cup \{j\} \in \mathcal{W}^{v'}_{\min}\}. 
\]


Define a bijection $\pi$  on $2^{N \setminus\{i,j\}}$ such that for each $t \in \{1, \ldots, k'\}$ and for each coalition $S \in 2^{N \setminus\{i,j\}}$  
of size $t-1$ with $S \cup \{j\} \in \mathcal{W}^{v'}_{\min}$, $\pi(S)=T$, where $T \in 2^{N \setminus\{i,j\}}$ is a coalition of size $t-1$, with $T \cup \{i\} \in \mathcal{W}^{v'}_{\min}$. So, the set of minimal winning coalitions contained in $\mathcal{S}^j$ after the transformation via $\pi$ is:
\[
\mathcal{T}^j=\{\pi(S)\cup \{j\}: S \in 2^{N \setminus\{i,j\}} \mbox{ with } 
S \cup \{j\} \in \mathcal{S}^j\}.
\]
Consider a new game $\hat{v}_{\pi}$ such that 
\[
\mathcal{W}^{\hat{v}_{\pi}}_{\min}=\left( \mathcal{W}^{v'}_{\min} \setminus \mathcal{S}^j \right)\cup \mathcal{T}^j .
\]
So, we have that for all coalitions $T \in 2^{N \setminus\{i,j\}}$ of size $t-1$, $t \in \{1, \ldots, k'\}$, 
\[
T \cup \{j\} \in \mathcal{W}^{\hat{v}_{\pi}}_{\min} \Rightarrow  T \cup \{i\} \in \mathcal{W}^{\hat{v}_{\pi}}_{\min},
\]
and, consequently, for all $S \in 2^{N \setminus\{i,j\}}$,
\[
S \cup \{j\} \in \mathcal{W}^{\hat{v}_{\pi}} \Rightarrow  S \cup \{i\} \in \mathcal{W}^{\hat{v}_{\pi}},
\]
which means that $ i \succeq^{\hat{v}_{\pi}} j$. So, by Property \ref{PD} (DM), we have that
$iP^{\hat{v}_{\pi}} j$.

On the other hand, 
$$ S \cup \{i\} \in \mathcal{W}^{v'}_{\min} \Leftrightarrow S \cup \{i\} \in \mathcal{W}^{\hat{v}_{\pi}}_{\min} $$
and
$$  S \cup \{j\} \in  \mathcal{W}^{v'}_{\min}\Leftrightarrow \pi(S)\cup \{j\} \in  \mathcal{W}^{\hat{v}_{\pi}}_{\min},$$
for all $S \in 2^{N \setminus\{i,j\}}$, and therefore, by Property \ref{AMWC} on $R$ applied to $v'$ and $\hat{v}_{\pi}$, we also have $iP^{v'} j$.

Finally, by  Property \ref{IHC} on $R$ (with $v'$ in the role of $v$ in the statement of  Property \ref{IHC}), we have that $iP^{v} j$, as $\mwcv=\mathcal{W}^{v'}_{\min} \cup \Sigma$.\\
\noindent
($\Leftarrow$)\\
Let $iP^{v} j$. Suppose that $i I_{\ran}^{v} j$. Then, by Definition \ref{def:critbased}, $\theta_{v}(i)=\theta_{v}(j)$. So, by Proposition \ref{prop:DCAPR}, $i I^{v} j$, which yields a contradiction with $i P^{v} j$. Since it can't even be $j P_{\ran}^{v} i$ (by the other implication proved above), and by the fact that $P^v_{\ran}$ is a total relation, it must be $i P_{\ran}^{v} j$.\\ \\
\noindent
We now prove that $ i I_{\ran}^{v} j \Leftrightarrow i I^{v} j$:\\
\noindent
($\Rightarrow$)\\
Let $iI_{\ran}^{v} j$. Then, by Definition \ref{def:critbased}, $\theta_{v}(i)=\theta_{v}(j)$. So, by Proposition \ref{prop:DCAPR} and the fact that $R^{v}$ satisfies Properties \ref{PD} and \ref{AMWC}, $iI^{v} j$.\\
($\Leftarrow$)\\
Let $iI^{v} j$. As we have shown previously, $ iP_{\ran}^{v} j \Leftrightarrow iP^{v} j.$ So it is not possible that $iP_{\ran}^{v} j$ or $jP_{\ran}^{v} i$. Since $P_{\ran}$ is a total relation, it must be $iI_{\ran}^{v} j$, which concludes the proof.

\end{proof}
We end this section showing the logical independence of Properties \ref{PD}, \ref{AMWC} and \ref{IHC}.

\begin{example}
\label{non}
[No Property \ref{PD}] Given $i,j \in N$, consider the ranking solution $R_{DM}$ defined by
$$i\ R_{DM}^v\ j \qquad {\rm iff\;} \qquad v(\{i\}) \geq v(\{j\}).$$
This solution satisfies all the Properties but Property \ref{PD}.
\end{example}

\begin{example}
\label{nodca}
[No Property \ref{AMWC}] For any $i \in N$, let $B(i)$ the the largest player index within minimal winning coalitions containing player $i$, i.e. $$B^v(i)=\max_{S \in \mwcn : i \in S} \big(\min_{j \in S\setminus \{i\}} j\big).$$
Consider the ranking solution $R_{AMWC}^v$ such that
\begin{equation*}
	\begin{cases}
		i\ I_{AMWC}^v \ j  & {\rm if\;} i\sim^v j, \\
		i\ P_{AMWC}^v \ j  & {\rm if\;} \theta_v(i) >_L \theta_v(j),\\
		i\ P_{AMWC}^v \ j  & {\rm if\;} (i,j) \notin\, \succeq^v, (j,i) \notin\, \succeq^v, \theta_v(i) = \theta_v(j) \mbox{ and } B^v(i)>B^v(j),\\
	\end{cases} 
\end{equation*}

This solution satisfies all the  Properties but Property \ref{AMWC}. 

[It is clear that $R_{AMWC}^v$ satisfies properties \ref{PD} and \ref{IHC}. To see that $R_{AMWC}^v$ does not satisfy Property \ref{AMWC}, consider, for instance, games $v$ and $v_\pi$ of Example \ref{ex:coalan}. As we noticed, a solution satisfying Property \ref{AMWC} should rank players $3$ and $4$ in the same way in both games $v$ and $v_\pi$, However, since $B^v(3)=B^v(4)=1$ in $v$ and $B^{v_{\pi}}(3)=1$ and $B^{v_{\pi}}(4)=2$ (and the two players are not in  a desirable relation in both games) we have that $3\ I_{AMWC}^v \ 4$, while $4\ P_{AMWC}^{v_{\pi}} \ 3$.]
\end{example}

\begin{example}
\label{noihc}
[No Property \ref{IHC}] For each $i \in N$, let $\overline{\theta}_v$ be the $n$-dimensional vector $\overline{\theta}_v(i)=(i_n,\dots,i_1)$ associated to $v$. Given $i,j \in N$, consider the vector  ranking solution $R_{ILMWC}^v$ such that
\begin{equation*}
	\begin{cases}
		i\ P_{ILMWC}^v \ j  & {\rm if\;} i\succ^v j, \\ 
		i\ I_{ILMWC}^v \ j  & {\rm if\;} \theta_v(i) = \theta_v(j), \\
		i\ P_{ILMWC}^v \ j  & {\rm if\;} (i,j) \notin\, \succeq^v, (j,i) \notin\, \succeq^v \mbox{ and } \overline{\theta}_v(i) >_L \overline{\theta}_v(j),\\
	\end{cases} 
\end{equation*}

This solution satisfies all the Properties but Property \ref{IHC}.

[It is easy to verify that $R_{ILMWC}^v$ satisfies Properties \ref{PD} and \ref{AMWC}. To see that $R_{ILMWC}^v$ does not satisfy property \ref{IHC}, consider, for instance, games $v$ and $v'$ of Example \ref{ex:ihc}. Notice that $1\succ^v 2$ and, so, $1\ P_{ILMWC}^v \ 2 $. However, in game $v'$, $(1,2) \notin \succeq^{v'}$ and $(2,1) \notin \succeq^{v'}$ ($1$ and $2$ are not in desirable relation), while 
$$\overline{\theta}_{v'}(2)=(0,1,0,1,0) >_L (0,0,3,1,0)=\overline{\theta}_{v'}(1)$$ and therefore $2\ P_{ILMWC}^{v'} \ 1 $.]

\end{example}

\section{Duality}\label{sec:dual}

In this section we investigate the connections between the LRM solution and the criticality-based ranking introduced in \cite{ADFM21} to rank players in a simple game. 
 In \cite{ADFM21} a ranking over players is defined according to the power of blocking the grand coalition to be winning. 
Given a simple game $(N,v)$  a coalition $B\subseteq N$ is called \emph{blocking} coalition for $N$ if $v(N\setminus B)=0$. Let $\bcv$ be the set of all blocking coalitions  in the game $(N,v)$ and let $\mbcv$ be the set of all minimal blocking coalitions $\mbcv=\Min \bcv$.
Denote by $i^*_k$ the number of minimal blocking coalitions (for $N$) of size $k$ containing player $i$, so $i^*_k=|\{B \in \mbcv: i \in B, |B|=k\}|$ for all $k \in \{1, \ldots, n\}$. For each $i \in N$, let $\theta^*_v(i)$ be the $n$-dimensional vector $\theta^*_v(i)=(i^*_1,\dots,i^*_n)$ associated to $v$.

 The criticality-based ranking is based on the idea that the smaller is the size of a blocking coalition, the larger is the influence on the blocking power of its members; the ranking of a player in terms of blocking power is positively correlated first to the size of minimal blocking coalitions the player belongs to and second to their number.

\begin{definition}\label{def:critbased}
	The \emph{criticality-based  solution} is the function $R_{\ranc}:\mathcal{SG}^N \longrightarrow \mathcal{T}^N$ defined for any  simple game $v \in \mathcal{SG}^N$ as
	$$
	i \  R_{\ranc}^v \ j \qquad {\rm if\;} \qquad \theta^*_v(i)\;\;\ge_L \;\;\theta^*_v(j).
	$$
	Let $I_{\ranc}^v$ and $P_{\ranc}^v$ be the symmetric part and the asymmetric part of $R_{\ranc}^v$, respectively.
\end{definition}

\begin{example}
Consider the simple game of Example \ref{ex: 1} then we have that 
\[
\mbcv=\{\{1,4\}, \{1,2,3\}, \{1,3,5\}, \{2,3,4\}, \{2,3,5\} \}.
\]
Therefore,
\begin{eqnarray*}
&\theta^*_{v}(1)=(0, 1, 2, 0, 0),\ &\theta^*_{v}(2)=(0, 0, 3, 0, 0),\\
&\theta^*_{v}(3)=(0, 0, 4, 0, 0),\ &\theta^*_{v}(4)=(0, 1, 1, 0, 0),\\
&\theta^*_{v}(5)=(0, 0, 2, 0, 0).
\end{eqnarray*}
So, the criticality-based ranking is such that
\begin{equation*}
	1\ P_{\ranc}^v\ 4\ P_{\ranc}^v\ 3\ P_{\ranc}^v\ 2\ P_{\ranc}^v\ 5.
\end{equation*}

\end{example}

We first show that the LRM coincides with the criticality-based ranking of the dual game.

\begin{proposition}\label{prop:coinc}
Let $(N,v)$ be a simple game. Then $R_{\ran}^{v}=R_{\ranc}^{v^*}$.
\end{proposition}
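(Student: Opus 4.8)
The plan is to reduce the equality of the two total preorders to a single identity between the vectors that define them. Recall that the dual of a simple game $(N,v)$ is $(N,v^*)$ with $v^*(S)=1-v(N\setminus S)$ for all $S\subseteq N$. By Definition \ref{def:lpgr} the ranking $R_{\ran}^{v}$ is obtained by comparing the vectors $\theta_v(i)$ lexicographically, while by Definition \ref{def:critbased} the ranking $R_{\ranc}^{v^*}$ is obtained by comparing the vectors $\theta^*_{v^*}(i)$ lexicographically. Hence it suffices to prove that $\theta_v(i)=\theta^*_{v^*}(i)$ for every $i\in N$: once the defining vectors coincide, the induced orderings $\ge_L$ coincide and therefore $R_{\ran}^{v}=R_{\ranc}^{v^*}$ immediately.

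The crux is the classical duality identity $\bcvv=\wcv$, i.e. the blocking coalitions of the dual game are exactly the winning coalitions of the original game. First I would verify this directly from the definitions. A coalition $B\subseteq N$ is a blocking coalition in $v^*$ precisely when $v^*(N\setminus B)=0$; and by the definition of the dual, $v^*(N\setminus B)=1-v\bigl(N\setminus(N\setminus B)\bigr)=1-v(B)$. Thus $B$ is blocking in $v^*$ if and only if $v(B)=1$, that is, if and only if $B\in\wcv$. This establishes the set equality $\bcvv=\wcv$ of families of coalitions.

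Since the $\Min$ operator depends only on the underlying family of sets, the equality $\bcvv=\wcv$ at once gives $\mbcvv=\mwcv$; no separate inclusion argument is needed, because the two families are literally the same. Consequently, for each player $i\in N$ and each size $k\in\{1,\dots,n\}$, the number of minimal blocking coalitions of $v^*$ of size $k$ containing $i$ equals the number of minimal winning coalitions of $v$ of size $k$ containing $i$; in the paper's notation this is exactly the componentwise identity $i^*_k=i_k$ (the former computed in $v^*$, the latter in $v$), so that $\theta^*_{v^*}(i)=\theta_v(i)$ for all $i\in N$, which closes the argument. I do not expect a genuine obstacle: the only point demanding care is the bookkeeping with the dual, namely checking that ``blocking in $v^*$'' translates to ``winning in $v$'' and not to some complemented or shifted family, after which the preservation of minimality is automatic from the set equality.
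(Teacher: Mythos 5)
Your proof is correct and takes essentially the same route as the paper: both reduce the claim to the vector identity $\theta_v=\theta^*_{v^*}$ via the duality between winning and blocking coalitions, from which the lexicographic rankings coincide. The only difference is that you verify the set equality $\bcvv=\wcv$ (hence $\mbcvv=\mwcv$) directly from the definition of $v^*$, whereas the paper invokes the equivalent identity $\mwcvv=\mbcv$ proved as Proposition 3 in \cite{ADFM21}; your version is self-contained but not a genuinely different argument.
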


\begin{proof}
Given the simple game $(N,v)$ its dual $v^*$ is defined by 
\begin{equation}\label{def:dual}
	v^*(S)=v(N) - v(N \setminus S),
\end{equation}
for each coalition $S \in 2^N$. 
The proposition follows recalling that $\mwcvv=\mbcv$, as proved in Proposition 3 in \cite{ADFM21}, and then $\theta_v=\theta^*_{v^*}$. 
\end{proof}


On the other hand, it is also interesting to study under which conditions the LRM  and the criticality-based ranking coincide. To this purpose, we analyse the  behaviour of the desirability relation on a simple game $v$ and its dual $v^*$.

\begin{proposition}\label{desdual}
Given a simple game $(N,v)$ and the dual game $(N,v^*)$ then, $\forall i,j\in N$, $i\neq j$
\begin{equation*}
i \succeq^{v} j\quad \iff \quad i \succeq^{v^*} j.
\end{equation*}
\end{proposition}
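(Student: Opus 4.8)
The plan is to prove the equivalence by establishing each direction via the contrapositive and, crucially, by unpacking the definition of the dual game. The key observation is that the desirability relation on the dual game can be translated back into a condition on winning coalitions of the original game $v$. Recall that $v^*(S)=v(N)-v(N\setminus S)=1-v(N\setminus S)$, so $S\in\mathcal{W}^{v^*}$ if and only if $v(N\setminus S)=0$, i.e. $N\setminus S$ is a losing coalition in $v$. The first step is therefore to rewrite the statement $i\succeq^{v^*}j$, which reads ``for all $S\subseteq N\setminus\{i,j\}$, $S\cup\{j\}\in\mathcal{W}^{v^*}\Rightarrow S\cup\{i\}\in\mathcal{W}^{v^*}$'', purely in terms of $v$.

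First I would perform this translation explicitly. For $S\subseteq N\setminus\{i,j\}$, the condition $S\cup\{j\}\in\mathcal{W}^{v^*}$ becomes $v(N\setminus(S\cup\{j\}))=0$; writing $T=N\setminus(S\cup\{i,j\})$, this says $T\cup\{i\}$ is losing in $v$, while $S\cup\{i\}\in\mathcal{W}^{v^*}$ says $v(N\setminus(S\cup\{i\}))=0$, i.e. $T\cup\{j\}$ is losing in $v$. Since $S\mapsto T=N\setminus(S\cup\{i,j\})$ is a bijection of $2^{N\setminus\{i,j\}}$ onto itself, the implication $i\succeq^{v^*}j$ is equivalent to: for all $T\subseteq N\setminus\{i,j\}$, if $T\cup\{i\}\notin\mathcal{W}^v$ then $T\cup\{j\}\notin\mathcal{W}^v$. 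Contrapositively, this is exactly the statement that $T\cup\{j\}\in\mathcal{W}^v\Rightarrow T\cup\{i\}\in\mathcal{W}^v$ for all such $T$, which is precisely $i\succeq^v j$. Hence $i\succeq^{v^*}j\iff i\succeq^v j$.

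Having established the equivalence of the weak relations, the second step handles the strict part. I would show $i\succ^{v}j$ is equivalent to $i\succ^{v^*}j$ by combining the above with the strictness witness. Since $i\succ^v j$ means $i\succeq^v j$ together with the existence of some $T\subseteq N\setminus\{i,j\}$ with $T\cup\{i\}\in\mathcal{W}^v$ and $T\cup\{j\}\notin\mathcal{W}^v$, I would trace this witness through the same bijection $S\leftrightarrow T=N\setminus(S\cup\{i,j\})$ to produce the corresponding witness $S$ separating $i$ and $j$ strictly in $v^*$. Because the weak relation is symmetric under duality by the first step, and the witness for strictness transfers bijectively, strictness in one game forces strictness in the other, giving $i\succ^v j\iff i\succ^{v^*}j$.

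The only mild subtlety, and the step I would watch most carefully, is bookkeeping the set complements and checking that the map $S\mapsto N\setminus(S\cup\{i,j\})$ genuinely restricts to a bijection on $2^{N\setminus\{i,j\}}$ and correctly swaps the roles of the ``if'' and ``then'' clauses, turning an implication into its contrapositive. Everything else is a direct substitution of the dual definition. No deep machinery is required; the result is essentially the well-known self-duality of the desirability relation, and the main obstacle is purely notational care with complementation rather than any conceptual difficulty.
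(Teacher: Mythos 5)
Your proposal is correct and follows essentially the same route as the paper's proof: both rest on the duality $S\in\mathcal{W}^{v^*}\iff N\setminus S\notin\mathcal{W}^{v}$ together with the complementation bijection $S\mapsto N\setminus(S\cup\{i,j\})$ on $2^{N\setminus\{i,j\}}$, with the paper phrasing each direction as a proof by contradiction where you phrase it as a direct contrapositive/equivalence chain. If anything, your version is slightly tidier in that you explicitly transfer the strictness witness between $v$ and $v^*$, a point the paper's proof (which really establishes the equivalence of the weak relations $\succeq^{v}$ and $\succeq^{v^*}$) leaves implicit.
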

\begin{proof}
$\Rightarrow$ By hypothesis, for all $S \subseteq N\setminus\{i,j\}$, $S\cup\{j\}\in\wcv$ implies that $S\cup\{i\}\in\wcv$. We want to prove that for all $ S\subseteq N\setminus\{i,j\}$ such that $N\setminus\{S\cup\{j\}\}\notin\wcv$ implies that $N\setminus\{ S\cup\{i\}\}\notin\wcv$. Suppose that   $N\setminus\{ S\cup\{i\}\}$ is winning then define $T:=N\setminus\{ S\cup\{i,j\}\}$. We observe that $T\cup\{j\}$ is winning then $T\cup\{i\}=N\setminus\{ S\cup\{j\}\}$ is winning, i.e. a contradiction.\\
$\Leftarrow$ By hypothesis, $\forall S\subseteq N\setminus\{i,j\}$, $N\setminus\{S\cup\{j\}\}\notin\wcv$ implies that $N\setminus\{S\cup\{i\}\}\notin\wcv$. We want to prove that for all $ S\subseteq N\setminus\{i,j\}$, $S\cup\{j\}\in\wcv$ implies that $S\cup\{i\}\in\wcv$. Suppose $S\cup\{i\}\notin\wcv$ and let $T =N\setminus\{S\cup\{i,j\}\}$. We observe that on the one hand $N\setminus\{T\cup\{i\}\}$ is not winning, but on the other hand $N\setminus\{T\cup\{i\}\}=S\cup\{j\}$ is winning, i.e. a contradiction,   and the proof is complete.
\end{proof}

\begin{corollary}
Let $(N,v)$ be a simple game such that the desirability relation is total. Then, $R_{\ran}^v=R_{\ranc}^v$. Moreover the LRM and the criticality-based ranking are self-dual.
\end{corollary}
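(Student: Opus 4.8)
The plan is to route everything through the desirability preorder, exploiting that both rankings are pinned down by it whenever it is total. First I would record that the desirability relation coincides on $v$ and its dual $v^*$. The preceding proposition already gives $i \succ^{v} j \iff i \succ^{v^*} j$ for all $i \neq j$; I would complement it with the analogous statement for the symmetric part. Using the defining identity $S \cup \{i\} \in \wc^{v^*} \iff N \setminus (S \cup \{i\}) \notin \wcv$ together with the substitution $T = N \setminus (S \cup \{i,j\})$, a direct computation shows $i \sim^{v} j \iff i \sim^{v^*} j$. Since the full preorder $\succeq$ is recovered from its strict and symmetric parts via $i \succeq j \iff (i \succ j \text{ or } i \sim j)$, I conclude that $\succeq^{v}$ and $\succeq^{v^*}$ coincide as relations; in particular, the hypothesis that $\succeq^{v}$ is total transfers to $\succeq^{v^*}$.

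Next I would observe that desirable monotonicity forces a unique ranking once desirability is total. If $\succeq^{v}$ is total, then for every pair $i \neq j$ exactly one of $i \succ^{v} j$, $j \succ^{v} i$, $i \sim^{v} j$ holds; by Property \ref{PD} (DM), which $R_{\ran}$ satisfies by Proposition \ref{prop:pd}, each of these cases dictates the corresponding relation $i P_{\ran}^{v} j$, $j P_{\ran}^{v} i$, or $i I_{\ran}^{v} j$. Hence $R_{\ran}^{v}$ coincides, as a total preorder, with $\succeq^{v}$. Applying the same reasoning to $v^*$ (legitimate because $\succeq^{v^*}$ is total by the previous step), I obtain $R_{\ran}^{v^*} = \succeq^{v^*}$.

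Finally I would assemble the pieces. Proposition \ref{prop:coinc} reads $R_{\ran}^{w} = R_{\ranc}^{w^*}$ for any simple game $w$; taking $w = v^*$ and using $(v^*)^* = v$ yields $R_{\ranc}^{v} = R_{\ran}^{v^*}$. Chaining the identities gives
$$
R_{\ranc}^{v} = R_{\ran}^{v^*} = \succeq^{v^*} = \succeq^{v} = R_{\ran}^{v},
$$
which is the claim. The one place demanding care is the first paragraph: the preceding proposition handles only the strict part of the desirability relation, so I must separately verify that the symmetric part is dual-invariant and that totality (an assumption on $v$ alone) genuinely passes to $v^*$; everything after that is a bookkeeping chain resting on the monotonicity of $R_{\ran}$ and on the duality identity of Proposition \ref{prop:coinc}.
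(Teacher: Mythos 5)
Your proof is correct and follows essentially the route the paper intends: the corollary is stated there without proof as the immediate consequence of Proposition \ref{prop:coinc} (applied to $v^*$, giving $R_{\ranc}^{v}=R_{\ran}^{v^*}$), the dual-invariance of the desirability relation, and the fact that desirable monotonicity (Proposition \ref{prop:pd}) forces $R_{\ran}$ to coincide with $\succeq$ whenever the latter is total. Your additional check that the symmetric part $\sim$ is also dual-invariant---so that totality of $\succeq^{v}$ genuinely transfers to $\succeq^{v^*}$---addresses a point the paper glosses over (its preceding proposition is stated only for the strict part $\succ$), and is exactly the right thing to verify.
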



\begin{example}\label{ex:weightcoinc}
Consider the weighted majority game $(N,v)$ of Example \ref{ex:nomon}. The minimal winning and blocking coalitions are
\[
\mwcv =\{\{1,2\}, \{1,3,4\}, \{1,3,5\}, \{1,4,5\} \},
\]
\[
\mbcv =\{\{1\}, \{2,3,4\}, \{2,3,5\}, \{2,4,5\} \}.
\]
We have that
\begin{eqnarray*}
&\theta^*_{v}(1)=(1, 0, 0, 0, 0),\ &\theta^*_{v}(2)=(0, 0, 3, 0, 0),\\
&\theta^*_{v}(3)=(0, 0, 2, 0, 0),\ &\theta^*_{v}(4)=(0, 0, 2, 0, 0),\\
&\theta^*_{v}(5)=(0, 0, 2, 0, 0).
\end{eqnarray*}
The criticality-based ranking is such that
\begin{equation*}
	1\ P_{\ranc}^v\ 2\ P_{\ranc}^v\ 3\ I_{\ranc}^v\ 4\ I_{\ranc}^v\ 5.
\end{equation*}

So, as expected for a weighted majority game in which the desirable relation is total, $R_{\ran}^v = R_{\ranc}^v $.

\end{example}

With the purpose of ranking players in a simple game according to their influence in the process of forming blocking coalitions, it seems natural to look at a dual version of Property  \ref{PD}.
\begin{property}[Dual Desirable Monotonicity (DDM)]\label{DPD} \rm
Let $i,j \in N$. For any  $v \in \mathcal{SG}^N$, a solution $R$ satisfies the  \textit{dual desirable monotonicity}  property if
$$
i \sim^{v^*} j \Rightarrow i I^{v} j,
$$
and
$$
i \succ^{v^*} j \Rightarrow i P^{v} j.
$$
\end{property}
A solution satisfying Property \ref{DPD} obeys to the desirability relation defined on dual games, specifying that $i$ is at least as desirable as $j$ if we can replace player $j$ with player $i$ in any blocking coalition (instead of in any winning one). In a similar fashion, Properties \ref{AMWC} and \ref{IHC} can be reformulated as their following dual counterparts. 
\begin{property}[Dual Anonymity of Minimal Winning Coalitions (DAMWC)]\label{DAMWC} \rm
	Let $i,j \in N$,  $v, v_{\pi} \in \mathcal{SG}^N$ and let $\pi$ be a bijection on $2^{N \setminus\{i,j\}}$ with $|\pi(S)|=|S|$ and such that
	$$ S \cup \{i\} \in \mwcvv \Leftrightarrow  S \cup \{i\} \in \mwcvvp$$
	and
	$$ S \cup \{j\} \in  \mwcvv \Leftrightarrow  \pi(S) \cup \{j\} \in  \mwcvvp,$$
	for all $S \in 2^{N \setminus\{i,j\}}$. 
	A solution $R$ satisfies the  \textit{anonymity of minimal winning coalitions}  property if
	$$i R^v j \Leftrightarrow iR^{v_{\pi}} j.$$
\end{property}
\begin{property}[Independence of Larger Minimal Winning Coalitions in the Dual (ILMWCD)]\label{IHCD} \rm
	Let $i,j \in N$. For any  $v \in \mathcal{SG}^N$, let $h=\max\{|S|:S \in \mwcv \mbox{ and } S \cap \{i,j\} \neq \varnothing\}$ be the highest cardinality of coalitions in the set $ \mwcvv$ containing either $i$ or $j$. Let $\mathcal{S}_h$ be a collection of (minimal) winning coalitions in the dual game $v^*$ with cardinality strictly larger than $h$, i.e., 
	$\mathcal{S}_h=\{S_1, \ldots, S_r\}$ such that $S_k \subseteq N$, $|S_k| >h$ for $k=1, \ldots, r$ and there is no $Q \in  \mwcvv\cup \mathcal{S}_h$ with $Q \subset S_k$,
	for all $k \in \{1, \ldots, r\}$.
	A solution $R$ satisfies the  property of \textit{independence of larger minimal winning coalitions in the dual}  if
	$$
	i P^v j \Rightarrow iP^{v'} j,
	$$
	where $v'$ is a simple game such that the set of minimal winning coalitions  is obtained as  $ \mwcpvv= \mwcvv \cup \mathcal{S}_h$.
\end{property}
We can state the following result.
\begin{theorem}\label{dual2}
The solution $R_{\ran'}$ such that $R_{\ran'}^{v}=R_{\ran}^{v^*}$ for all $v \in \mwcv$ is the unique solution that fulfils Properties \ref{DPD}, \ref{DAMWC} and \ref{IHCD}.
\end{theorem}
\begin{proof}
The proof follows the same steps of the proof of Theorem \ref{2}, with $v^*$ in the role of $v$.  
\end{proof}
By Proposition \ref{prop:coinc}, and the fact that $(v^*)^*=v$ (the dual of the dual of a game $v$ equals game $v$), we have that $R_{\ran'}=R_{\ranc}$. Moreover, by Proposition \ref{desdual}, we have that  Properties \ref{PD} and \ref{DPD} are equivalent. So, the following corollary holds.
\begin{corollary}\label{dual4}
The critcality-based solution $R_{\ranc}$ is the unique solution that fulfils Properties \ref{PD}, \ref{DAMWC} and \ref{IHCD}.
\end{corollary}
In \cite{ADFM21}, the critcality-based solution has been axiomatically characterized using four properties, namely, \textit{Players' Anonymity}, \textit{Dual Coalitional Anonymity}, \textit{Dual Monotonicity} and \textit{Independence of Higher Cardinalities}(see Sections 4 in \cite{ADFM21} for a formal definitions of these axioms). Notice that Property \ref{DAMWC} coincides with the property of \textit{Dual Coalitional Anonymity} in \cite{ADFM21}, while Property \ref{IHCD} coincides with the property of Independence of Higher Cardinalities in \cite{ADFM21}. So, according to Corollary \ref{dual4}, Property \ref{PD} replaces properties of Players' Anonymity and  Dual Monotonicity in the axiomatic characterization of the criticality-based solution presented in \cite{ADFM21}.

\begin{example}\label{ex:Freixas}
	Consider the simple game $(N,v)$ in Example 2.7 in \cite{Freixas10}:
	\[
	 \mwcv=\{\{1,2,3\},\{1,2,4\},\{1,2,5\},\{1,3,4\},\{3,4,5\}\};
	\]
	\[
	\mbcv=\{\{1,3\},\{1,4\},\{1,5\},\{2,3\},\{2,4\},\{3,4,5\}\};
	\]
The LRM solution in $v$ is 
\begin{equation*}
	1\ P_{\ell}^v\ 2\ I_{\ell}^v\ 3\ I_{\ell}^v \ 4\ P_{\ell}^v\ 5.
\end{equation*}
and the LRM solution in $v^*$ is
\begin{equation*}
	1\ P_{\ell}^{v^*}\ 3\ I_{\ell}^{v^*} \ 4\ P_{\ell}^{v^*}\ 2\ P_{\ell}^{v^*}\ 5.
\end{equation*}
Then if the desirability relation is not a total preorder the LRM solution in not self-dual. The same result holds for the criticality-based solution. 
\end{example}
We conclude this section pointing out that the  axioms of Dual Coalitional Monotonicity  and of Players' Anonymity are replaced by Desirable Monotonicity (Property \ref{PD}) in the characterization of the criticality-based solution in \cite{ADFM21}, as shown by Corollary \ref{dual4}, but the two axioms do not imply Property \ref{PD}. In fact, as shown by Example \ref{ex:nomon}, the ranking over players represented by the PGI does not satisfy Property \ref{PD}  (we leave to the reader to check that such a ranking satisfy both Dual Coalitional Monotonicity  and Players' Anonymity axioms).

\section{Conclusions}\label{sec:concl}
In this paper, using the minimal winning coalitions of a simple game, we introduced a new ranking among players that satisfies the desirability relation: the Lexicographic Ranking based on Minimal winning coalitions. The players are ranked according to the size of the minimal winning coalitions they belong to and then to the number of such coalitions.  The ranking solution satisfies the coalitional anonymity property and the independence of larger minimal winning coalitions property that together with a monotonicity property rooted on the desirability  relation uniquely characterized it. Looking at the dual game, we prove that there is a relation between the Lexicographic Ranking based on Minimal winning coalitions and the criticality-based ranking and, consequently, between ranking players according to their power to win and to their power to block the grand coalition. In particular, if the desirable relation is total the two rankings coincide. \\
Following this line of research, it would be interesting to delve more into the connection between the power to initiate and the power to block a winning coalition \cite{Deegan80}, in particular, when the desirability relation between two players does not hold.

\paragraph*{Acknowledgment} The authors want to thank professor Marco Dall'Aglio for the valuable discussions and the anonymous referee for the useful comments. \\
S. Moretti gratefully acknowledges the support of the ANR project THEMIS (ANR-20-CE23-0018)

\end{document}